\documentclass[12pt]{article}

\usepackage{amssymb}
\usepackage{amsthm}
\newtheorem{theorem}{Theorem}[section]

\newtheorem{definition}[theorem]{Definition}

\newtheorem{remark}[theorem]{Remark}
\setcounter{page}{1}

\usepackage{amssymb,amsmath,bbm}
\usepackage{enumerate}
\usepackage{resizegather}
\usepackage{xcolor}
\usepackage{graphicx}

\def\sqr#1#2{{\vcenter{\vbox{\hrule height .#2pt
     \hbox{\vrule width .#2pt height#1pt \kern#1pt \vrule
     width .#2pt} \hrule height .#2pt}}}}

\newcommand{\norm}[1]{\left\lVert#1\right\rVert}

\newcommand{\R}{{\mathbb{R}}}

\usepackage[]{footmisc}

\DeclareMathOperator{\Den}{Den}

\title{A geometric characterization of VES and Kadiyala-type production functions}
\date{}
\author{Nicol\`o Cangiotti{$^{1}$}, Mattia Sensi{$^{2}$}\\[1em]
\small $^1$University of Pavia, Department of Mathematics, via Ferrata 5, \\ \small 27100 Pavia (PV), Italy. Email: \texttt{nicolo.cangiotti@unipv.it}\\
 \small $^2$University of Trento, Department of Mathematics, via Sommarive 14,\\ \small 38123 Trento (TN), Italy. Email: \texttt{mattia.sensi@unitn.it}}

\begin{document}

\maketitle
\begin{abstract}
The basic concepts of the differential geometry are shortly reviewed and applied to the study of VES production function in the spirit of the works of V\^ilcu and collaborators. A similar characterization is given for a more general production function, namely the Kadiyala production function, in the case of developable surfaces. 
\end{abstract}

\noindent
\emph{Keywords}: Production functions; Variable Elasticity of Substitution; Gauss curvature; Production hypersurface; Mean curvature; Flat space.
\bigskip

\noindent
\emph{2010 MSC:} 53A07,  91B02, 91B15.

\section{Introduction}
\label{Intro}

The study of production functions in the context of neoclassical economy has a long tradition of research from many fields of knowledge.  As pointed out by T. M. Humphrey \cite{Humphrey97}, the first to make a significant contribution to the development of a mathematically consistent approach to the marginal productivity theory was the German mathematical economist, location theorist, and agronomist Johann Heinrich von Th\"unen, in the 19th century (for further details on the history of production functions, see the working paper of S. K. Misha \cite{Mishra2010}). 

The fortune of this mathematical model came in the 1927, when the economist P. Douglas and the mathematics professor C. W. Cobb proposed their famous equation, largely used in the textbooks as well as cited in articles and in  surveys \cite{CobbDouglas28}. Over the following years, the Cobb-Douglas production function became a key concept of neoclassical economics\footnote{It is appropriate to notice that the interpretation of the Cobb-Douglas production function is still a debated topic, as one can read, e.g., in \cite{Labini95}.} (for interesting updates and testing due to Douglas himself, see \cite{Douglas1976}). At the end of the 1950s, R. M. Solow introduced a generalization of the Cobb-Douglas production function: the CES (Constant Elasticity of Substitution) production function \cite{Solow56}; his idea was to aggregate the inputs in a single quantity. The function that realizes this combination of inputs is the so-called \emph{aggregator function}. The aggregator function of CES functions has a constant elasticity of substitution\footnote{We recall that in neoclassical production theory the elasticity of substitution, introduced by J. Hicks in the 1930s \cite{Hiks1932}, provides a measure  degree of substitutability between two factors of productions.}.
However, a different generalization was developed between the 1960s and 1970s by C. A. K. Lovell \cite{Lovell68, Lovell73}, Y. G. Lu and L. Fletcher \cite{Lu68} and N. S. Revanark \cite{Revankar67, Revankar71}: the VES (Variable Elasticity of Substitution) production function. Our analysis stems from the study of this last class of functions (in particular, from the formalization due to Revanark). The approach which we shall use could be called the \emph{differential geometric approach}. This particular technique, in connection with the study of production functions, was introduced and developed much more recently by A. D. V\^ilcu, and G. E. V\^ilcu \cite{Vilcu2011a,Vilcu2011b,Vilcu2013, Vilcu2015, Vilcu2017}. Many contributions are due to B.-Y. Chen \cite{Chen2011,Chen2012a,Chen2012b,Chen2012c,Chen2012d}, and X. Wang \cite{Wang2014, Wang2013}, as well.
The classical theory of  production functions is based on the projections of such functions on a plane, but such an approach does not seem exhaustive, at least from the mathematical point of view. V\^ilcu solved this problem by identifying a production function $Q:\R^n\to \R$ with $\mathcal{Q}$, the graph of $Q$; this turns out to be the nonparametric hypersurface of the $(n+1)$-dimensional Euclidean space $\R^{n+1}$ defined by:
\begin{equation}
G(x_1,\dots,x_n)=\left( x_1,\dots ,x_n, Q(x_1, \dots, x_n) \right ) \qquad (x_1, \dots, x_n) \in \R^n_+.
\end{equation}
Thanks to this reinterpretation, one can study the production functions in terms of the geometry of their graphs $\mathcal{G}\subset \R^{n+1}$. 
\begin{remark}
We are denoting with $G:\R^n\to \R^{n+1}$ a parametrization of the hypersurface $\mathcal{G}$, which is a $n$-dimensional subset of the $(n+1)$-dimensional Euclidean space. This distinction is rather important in the formalism of differential geometry. If $G$ is defined in $A\subset \R^n$, then $\mathcal{G}=G(A)$.
\end{remark}
The aim of the paper is twofold. Starting from basic concepts of differential geometry of surfaces, we shall study the $2$-inputs (i.e. $2$- dimensional) VES production function, obtaining a result, which essentially agree to those achieve by V\^ilcu and collaborators for the generalized Cobb-Douglas production function, and for the generalized CES production function in relation with the Gaussian curvature of the corresponding surface. Consequently, we explore a more general $2$-inputs production function introduced by Kadiyala in the 1970s \cite{Kadiyala72}. 
A renewed interest for the function introduced by Kadiyala seems to have arisen in recent years, particularly due to the works of C. A. Ioan and G. Ioan \cite{Ioan2011} and V\^ilcu \cite{Vilcu2018}. For this particular function, which is a combination of Cobb-Douglas, CES and VES production functions, we prove a result on the corresponding Gaussian curvature in the case of developable surfaces.
\smallskip

The paper is organized as follows. In Section 2 we present an overview of the differential geometry of surfaces, with basic definitions and properties. In Section 3 we study the VES production function as a surface, proving a result which links the returns to scale with the Gaussian curvature (in the same way as done by V\^ilcu, for instance, in \cite{Vilcu2011a}).
In Section 4 we show that the results concerning returns to scale and Gaussian curvature are not valid for a more general 2-inputs production function, namely the Kadiyala production function. 
Finally, in Section 5 we draw the conclusion, giving some suggestions for further developments. 

\section{Basic concepts of Differential Geometry}
\label{Sec.1}
In this section we recall some basic concepts of differential geometry of $2$-dimensional surfaces in $\mathbb{R}^3$ (we refer to \cite{doCarmo} for further readings); these concepts can easily be generalized to $n$-dimensional hypersurfaces in $\mathbb{R}^{n+1}$, for which we refer to \cite{Vilcu2011a}; however, it is unnecessary for the purpose of this article, in which we focus on functions of two variables, namely the VES and the Kadiyala production functions.
\bigskip

Let $U$ be an open set in $\mathbb{R}^2$, and let $f:U\rightarrow \mathbb{R}$ be a (smooth) function. Let $F:\R^2\to \R^3$, defined as
\[
F(x_1,x_2)=(x_1,x_2,f(x_1,x_2)),
\]
be the parametrization of the surface 
\begin{equation}
\mathcal{F}=\{ (x_1,x_2,f(x_1,x_2))\in \mathbb{R}^3 | (x_1,x_2)\in U \}.
\label{eqn:surfa}
\end{equation}
Denote with $\langle\cdot,\cdot\rangle$ the natural inner product on $\mathbb{R}^3$, and with $\norm{\cdot}$ the norm it induces. With this notation, we can give the following:

\begin{definition}
The \emph{first fundamental form} $g$ of the surface $\mathcal{F}$ is given by
\begin{equation}
g:=\sum_{i=1}^{2} g_{ii}\textnormal{d}x_i^2 + 2\sum_{1\leq i<j\leq 2}g_{ij}\textnormal{d}x_i\textnormal{d}x_j=g_{11}\textnormal{d}x_1^2+g_{22}\textnormal{d}x_2^2+2g_{12}\textnormal{d}x_1\textnormal{d}x_2,
\end{equation}
where 
\begin{equation}
g_{ii}=\langle\frac{\partial f}{\partial x_i},\frac{\partial f}{\partial x_i}\rangle, i \in \{ 1,2 \} \quad \textnormal{ and } \quad g_{ij}=\langle\frac{\partial f}{\partial x_i},\frac{\partial f}{\partial x_j}\rangle, 1\leq i<j \leq 2.
\end{equation}
\end{definition}

\begin{definition}
The \emph{second fundamental form} $h$ of the surface $\mathcal{F}$ is given by
\begin{equation}
h:=\sum_{i=1}^{2} h_{ii}\textnormal{d}x_i^2 + 2\sum_{1\leq i<j\leq 2}h_{ij}\textnormal{d}x_i\textnormal{d}x_j=h_{11}\textnormal{d}x_1^2+h_{22}\textnormal{d}x_2^2+2h_{12}\textnormal{d}x_1\textnormal{d}x_2,
\end{equation}
where 
\begin{equation}
h_{ii}=\langle N,\frac{\partial^2 f}{\partial x_i^2}\rangle, i \in \{ 1,2 \} \quad \textnormal{ and } \quad h_{ij}=\langle N,\frac{\partial^2 f}{\partial x_i \partial x_j}\rangle, 1\leq i<j \leq 2,
\end{equation}
and
\begin{equation}
N=\frac{\frac{\partial f}{\partial x_1} \times \frac{\partial f}{\partial x_2}}{\norm{\frac{\partial f}{\partial x_1} \times \frac{\partial f}{\partial x_2}}},
\end{equation}
is the Gauss map of the surface, and $\times$ indicates the vector product in $\mathbb{R}^3$; i.e., $N$ is the unit normal vector of the surface in each point.
\end{definition}
\begin{remark}
The standard notation for the first and second fundamental form is to indicate their (symmetric) matrix representations with the Roman numeral $I=(g_{ij})_{i,j}$ and $II=(h_{ij})_{i,j}$, respectively.
\end{remark}
We can now give the concluding definitions for this section:

\begin{definition}
The \emph{Gaussian curvature} of a point $x$ of the surface is given by
\begin{equation}
K(x)=\frac{\det [II](x)}{\det [I](x)}.
\end{equation}
\end{definition}

\begin{definition}
We call \emph{developable} a surface having zero Gaussian curvature in all its points.
\end{definition}
We are particularly interested in developable surfaces because they can be flattened on a plane by projection, without losing essential information about their geometry, hence easing their study.\\
The main results of the paper are two theorems, which give conditions on the VES and Kadiyala production functions, which ensure the corresponding surfaces are developable.

\section{2-Input VES Production Function}
\label{Sec.2}
This section is devoted to the study of the VES production function, introduced by N. S. Revankar in \cite{Revankar66, Revankar67, Revankar71}:
\begin{equation}
Q(u,v)=k u^{\delta  (1-\beta  \rho )} ((\rho -1) u+v)^{\beta  \delta  \rho }.
\end{equation}
We shall assume the following set of hypotheses:
\begin{equation*}
(\star)
\begin{cases}
k>0, \\ 
0<\beta<1,\\ 
0<\beta \rho <1,\\ 
(\rho -1) u+v>0,\\
\delta>0.
\end{cases}
\end{equation*}
In this settings, $\delta$ is the parameter of return to scale.

\begin{figure}[ht!]
    \centering
\includegraphics[width=0.9\textwidth]{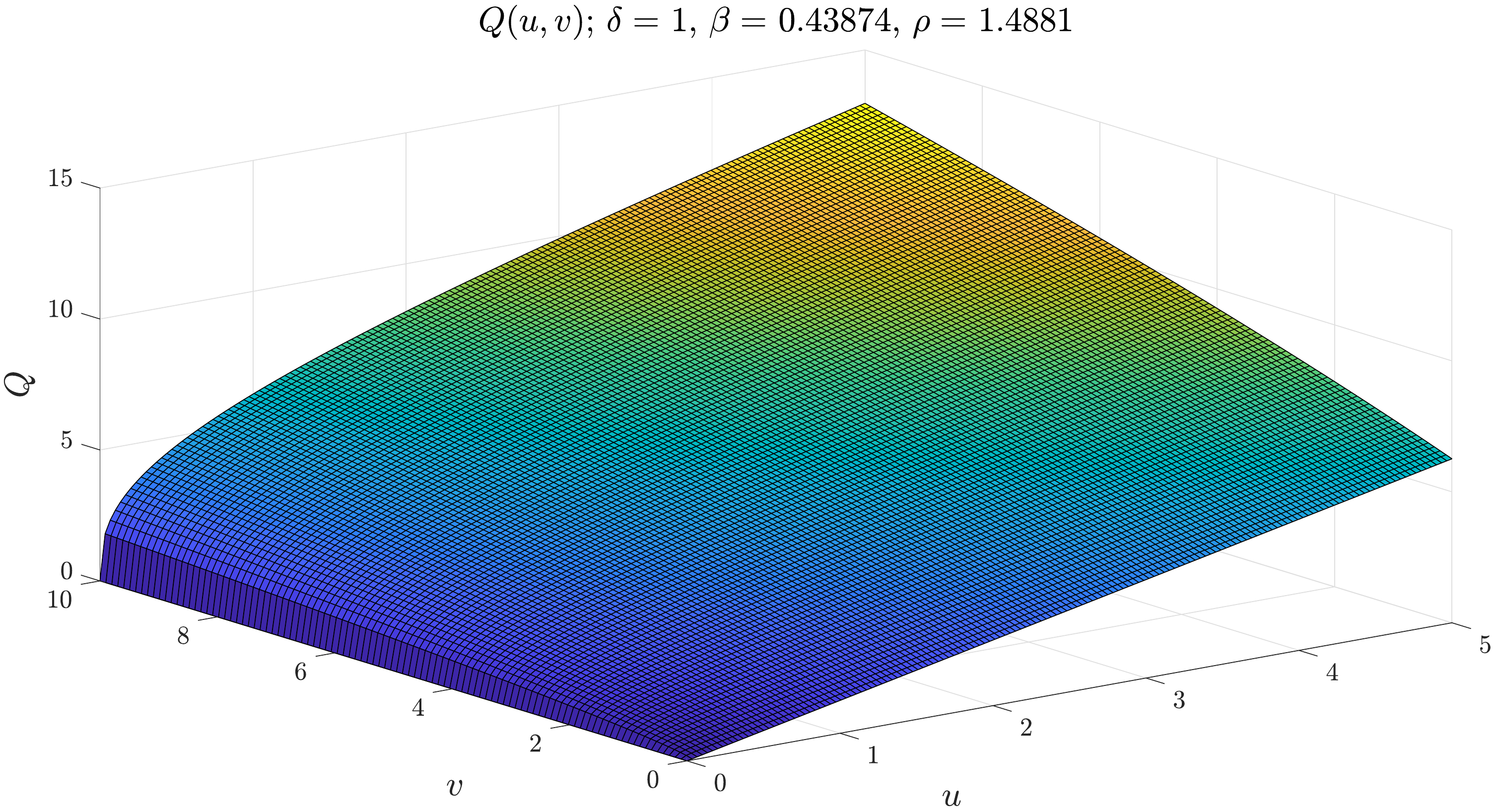}    \caption{Plot of $Q(u,v)$ for a random choice of the parameters $\beta$ and $\rho$ satisfying $(\star)$.}
    \label{fig:Qfig}
\end{figure}

\begin{remark}
We recall that a VES production function has constant, increasing or decreasing returns to scale if $\delta=1$, $\delta>1$, or $\delta<1$, respectively.
\end{remark}
\begin{remark}
The assumptions $(\star)$ allows us to exclude degenerate cases in which, for instance, one of the two inputs is removed. Moreover, the assumption $(\rho-1)u+v>0$ it is necessary when $\rho<1$ to ensure the well-posedness of $Q(u,v)$ (it is clear that if $\rho>1$, since $u,v>0$, this condition is redundant). We notice also that for $\rho <1$ we can rewrite that condition as
\[
\frac{u}{v}<\frac{1}{1-\rho},
\]
or, equivalently,
\begin{equation}
\frac{v}{u}>1-\rho.
\label{eqn:condizz}
\end{equation}
\end{remark}
By using the same notation of Sect. \ref{Intro}, we introduce the following \emph{VES surface} parametrized by
\begin{equation}
\label{Param1}
    G(u,v)=(u,v,Q(u,v)).
\end{equation}
\begin{remark}
In \cite{Revankar66}, Revankar proved that the elasticity of substitution a for the VES production function is
\[
\sigma(u,v)=1+ \frac{\rho-1}{1-\beta\rho}\frac{u}{v}.
\]
Hence, the VES production function varies linearly with the capital-labor ratio $u/v$. In \cite{Revankar71}, Revankar assumes $\sigma>0$ obtaining, as an additional constraint for the economically relevant region of the variables domain,
\[
\frac{v}{u}>\frac{1-\rho}{1-\beta\rho},
\]
which, since $1-\beta\rho<1$, is stricter than (\ref{eqn:condizz}).
\end{remark} 
We can now present and prove the main theorem of this section.
\begin{theorem}
\label{Thm1} 
Let us consider the parametrization of a VES surface defined in Eq. \eqref{Param1}, with $Q(u,v)$ satisfying conditions $(\star)$.

\begin{itemize}
\item The VES production function has constant return to scale if and only if the VES surface is developable.
\item The VES production function has decreasing return to scale if and only if the VES hypersurface has positive Gaussian curvature.
\item The VES production function has increasing return to scale if and only if the VES hypersurface has negative Gaussian curvature.
\end{itemize}
\end{theorem}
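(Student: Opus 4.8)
The plan is to reduce the entire statement to a single sign computation. For the graph parametrization $G(u,v)=(u,v,Q(u,v))$ one has $\frac{\partial G}{\partial u}=(1,0,Q_u)$ and $\frac{\partial G}{\partial v}=(0,1,Q_v)$, so $\det[I]=1+Q_u^2+Q_v^2>0$ and, using the Gauss map $N=(-Q_u,-Q_v,1)/\sqrt{1+Q_u^2+Q_v^2}$, the entries of the second fundamental form are $h_{ij}=Q_{ij}/\sqrt{1+Q_u^2+Q_v^2}$. Consequently
\begin{equation*}
K=\frac{\det[II]}{\det[I]}=\frac{Q_{uu}Q_{vv}-Q_{uv}^2}{\left(1+Q_u^2+Q_v^2\right)^2}.
\end{equation*}
Since the denominator is strictly positive, the sign of $K$ (and whether it vanishes) is governed entirely by the sign of the Hessian determinant $Q_{uu}Q_{vv}-Q_{uv}^2$. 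The whole theorem therefore amounts to expressing this quantity in terms of $\delta$.

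Next I would compute the derivatives of $Q$. Writing $w=(\rho-1)u+v$, $a=\delta(1-\beta\rho)$ and $b=\beta\delta\rho$, we have $Q=ku^{a}w^{b}$, and a useful structural observation is that $a+b=\delta$, so $Q$ is homogeneous of degree $\delta$. Logarithmic differentiation gives $Q_u/Q=a/u+b(\rho-1)/w$ and $Q_v/Q=b/w$, from which the second partials follow after one more differentiation. The core computation is then to insert these into $Q_{uu}Q_{vv}-Q_{uv}^2$ and simplify.

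I expect this algebraic collapse to be the main obstacle: the expansion produces several mixed terms in $1/u$, $1/w$ and $(\rho-1)$ which must cancel in pairs, leaving the compact expression
\begin{equation*}
Q_{uu}Q_{vv}-Q_{uv}^2=Q^2\,\frac{ab\,(1-\delta)}{u^2w^2}.
\end{equation*}
A cleaner route that sidesteps most of the bookkeeping is to exploit homogeneity directly: writing $Q=u^{\delta}\phi(t)$ with $t=v/u$, a standard manipulation yields $Q_{uu}Q_{vv}-Q_{uv}^2=u^{2\delta-4}(\delta-1)\big[\delta\phi\phi''-(\delta-1)(\phi')^2\big]$ for any degree-$\delta$ homogeneous function, and substituting $\phi(t)=k\big((\rho-1)+t\big)^{b}$ reduces the bracket to $-k^2ab\big((\rho-1)+t\big)^{2b-2}$, which gives the same conclusion.

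Finally I would read off the signs. Under $(\star)$ we have $k>0$, $\delta>0$, $1-\beta\rho>0$ and $\rho>0$ (the last because $\beta>0$ and $\beta\rho>0$), hence $a>0$ and $b>0$; moreover $u>0$, $w>0$ and $Q>0$. Thus the prefactor $Q^2ab/(u^2w^2)$ is strictly positive, so $Q_{uu}Q_{vv}-Q_{uv}^2$ has exactly the sign of $1-\delta$. This immediately yields the three implications: $\delta=1\Rightarrow K=0$, $\delta<1\Rightarrow K>0$, and $\delta>1\Rightarrow K<0$. Since the three hypotheses on $\delta$ are mutually exclusive and exhaustive, as are the three conclusions on $K$, each implication upgrades to an equivalence, which is precisely the three bullet points of the theorem.
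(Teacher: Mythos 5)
Your proof is correct, and both of your suggested computational routes check out: for $Q=ku^aw^b$ with $a=\delta(1-\beta\rho)$, $b=\beta\delta\rho$, $w=(\rho-1)u+v$, the expansion indeed collapses to $Q_{uu}Q_{vv}-Q_{uv}^2=Q^2ab(1-\delta)/(u^2w^2)$ (the mixed terms in $1/(uw^3)$ and $1/w^4$ cancel exactly as you predict), and the homogeneity identity $Q_{uu}Q_{vv}-Q_{uv}^2=u^{2\delta-4}(\delta-1)\bigl[\delta\phi\phi''-(\delta-1)(\phi')^2\bigr]$ with $\phi(t)=k\bigl((\rho-1)+t\bigr)^b$ gives the same factorization. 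The sign analysis under $(\star)$ (noting $\rho>0$ follows from $\beta>0$ and $\beta\rho>0$) and the trichotomy argument upgrading implications to equivalences are both sound. The paper reaches the same conclusion by the same underlying idea --- the sign of $K$ equals the sign of $1-\delta$ --- but executes it differently: it simply displays the fully expanded closed form of $K$ (evidently obtained by computer algebra), with a large denominator $\Den_F(u,v)$ whose positivity must then be argued separately, and reads the sign off the numerator $\beta(\delta-1)\delta^2k^2\rho(\beta\rho-1)u^{2(\beta\delta\rho+\delta+1)}\bigl((\rho-1)u+v\bigr)^{2\beta\delta\rho+2}$. Your route is more self-contained and more illuminating: reducing to the Hessian determinant makes the denominator trivially positive from the start, the logarithmic-differentiation/homogeneity structure explains \emph{why} the algebra collapses rather than asserting that it does, and your intermediate identity for general degree-$\delta$ homogeneous functions would transfer verbatim to other production functions of the same homogeneity class, which the paper's brute-force formula does not.
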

\begin{proof}
We can write the Gaussian curvature (defined as in Sec. \ref{Sec.1}) of the VES surface explicitly, using Eq. \eqref{Param1}, obtaining:
\begin{gather}
    K=\frac{\beta  (\delta -1) \delta ^2 k^2 \rho  (\beta  \rho -1) u^{2 (\beta  \delta  \rho +\delta +1)} ((\rho -1) u+v)^{2 \beta  \delta  \rho +2}}{\left(\Den_F(u,v)\right)^2},
\end{gather}
where
\begin{align*}
\Den_F(u,v)=&\delta ^2 k^2 u^{2 \delta } \left(u^2 \left(\rho  \left(\beta ^2 \rho +\rho -2\right)+1\right)-2 (\rho -1) u v (\beta  \rho -1)+\right . \\
& \left . v^2 (\beta  \rho -1)^2\right) ((\rho -1) u+v)^{2 \beta  \delta  \rho }+((\rho -1) u+v)^2 u^{2 \beta  \delta  \rho +2}\\
=&\delta ^2 k^2 u^{2 \delta } \left(\beta ^2 \rho ^2 u^2+((\rho -1) u-v( \rho  \beta -1))^2\right) ((\rho -1) u+v)^{2 \beta  \delta  \rho }+\\
&((\rho -1) u+v)^2 u^{2 \beta  \delta  \rho +2}.
\end{align*}
It is easy to see that $\Den_F(u,v)\neq 0$ for $u,v>0$. The claim follows immediately, keeping in mind assumptions $(\star)$.
\end{proof}

\section{2-Input Kadiyala  Production Function}
In the 1970s, Kadiyala introduced an interesting generalization of production functions \cite{Kadiyala72} (see also the works already mentioned by Ioan \cite{Ioan2011} and V\^ilcu \cite{Vilcu2018}.), which in this section we shall study with a differential geometry approach. The production function is given by:
\begin{equation}
\label{K1}
P(u,v)=\left(k_1 u^{\beta _1+\beta _2}+2 k_2 u^{\beta _1} v^{\beta _2}+k_3 v^{\beta _1+\beta _2}\right){}^{\frac{\delta }{\beta _1+\beta _2}}.
\end{equation}
We shall assume the following set of hypotheses:
\smallskip

\begin{equation*}
(\star\star)
\begin{cases}
k_1 + 2k_2 + k_3 = 1,\\ 
k_i \ge 0, \quad i=1,2,3, \\ 
(k_1,k_2)\neq (0,0),\\
(k_2,k_3) \neq (0,0),\\
\beta_1(\beta_1 + \beta_2)>0,\\
\beta_2(\beta_1 + \beta_2) > 0,\\ 
\delta>0.
\end{cases}
\end{equation*}
As in the previous section, $\delta>0$ is the parameter of returns to scale.
\begin{figure}[ht!]
    \centering
\includegraphics[width=0.9\textwidth]{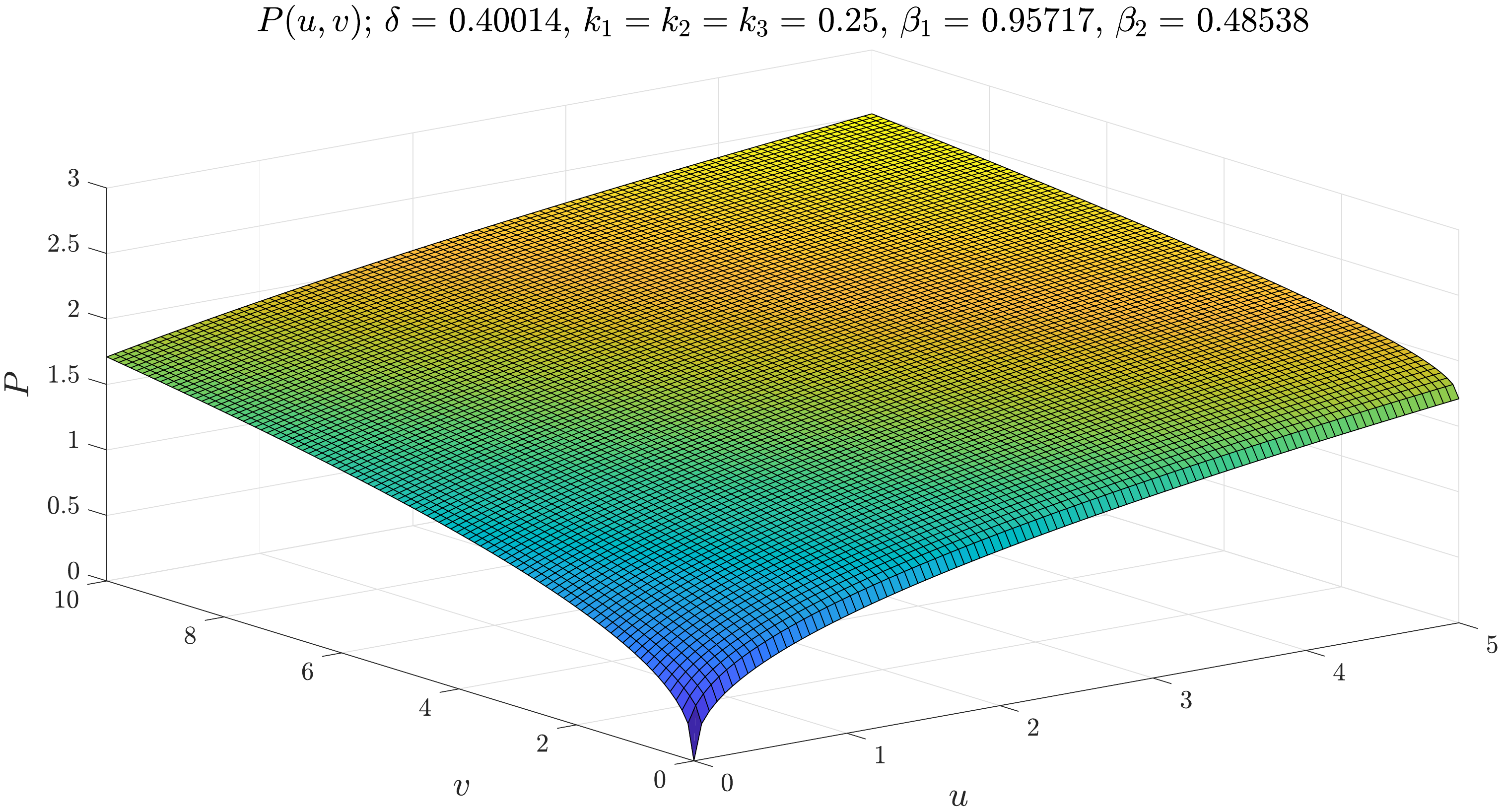}    \caption{Visualization of $P(u,v)$ for a random choice of the parameters $\beta_1$, $\beta_2$ and $\delta<1$.}
    \label{fig:Pmin}
\end{figure}

\begin{figure}[ht!]
    \centering
\includegraphics[width=0.9\textwidth]{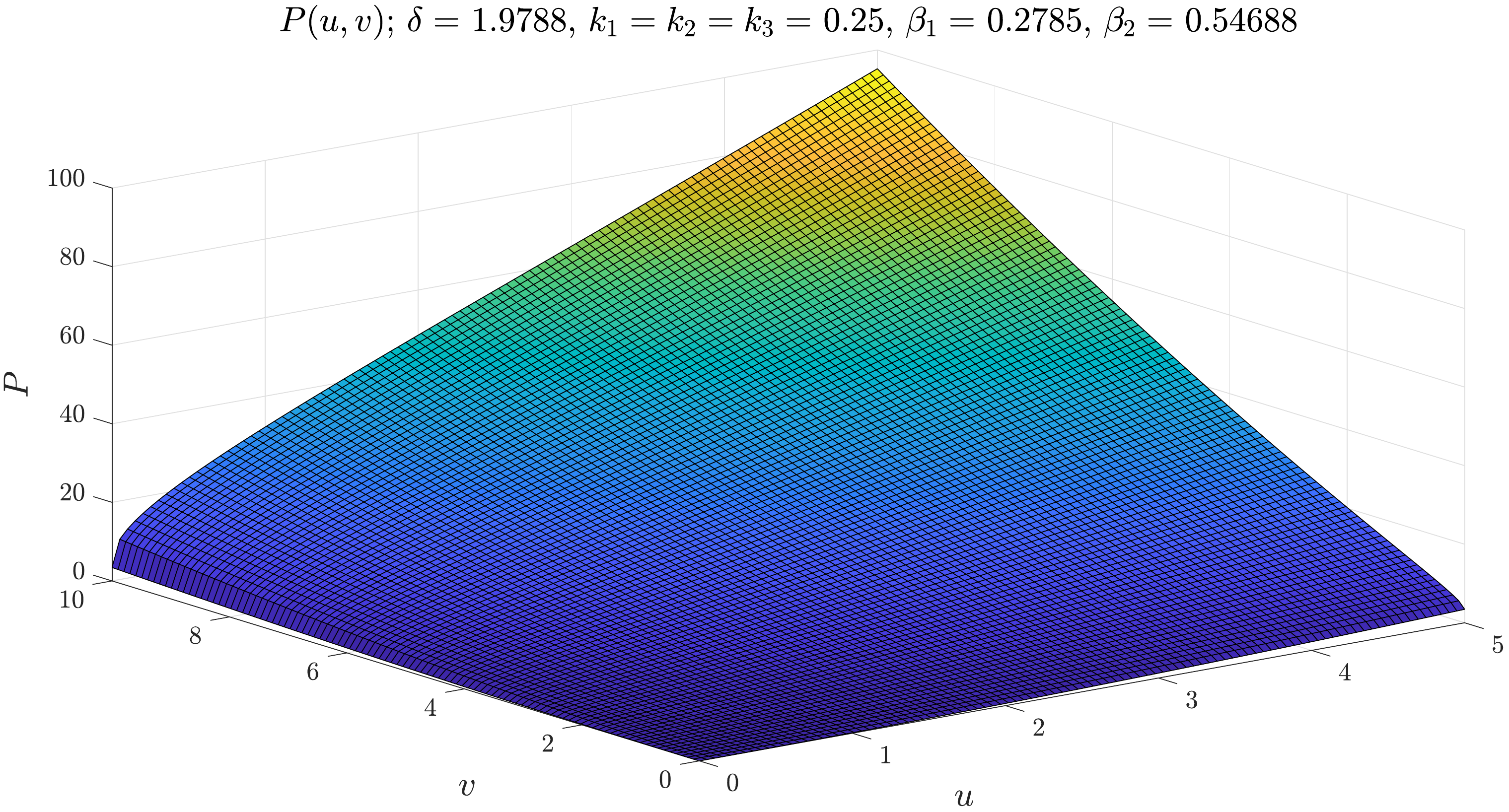}    \caption{Visualization of $P(u,v)$ for a random choice of the parameters $\beta_1$, $\beta_2$ and $\delta>1$.}
    \label{fig:Pmag}
\end{figure}

\begin{remark}
We are assuming $k_1 +2k_2+k_3=1$, without loss of generality. The function $P(u,v)$ is homogeneous of degree one (in the inputs $u$ and $v$) when $\delta=1$ (i.e., for constant returns to scale). We also assume that $\beta_1$ and $\beta_2$ have the same sign as $\beta_1+\beta_2$. In this way, we ensure that the marginal products are non-negative. The two conditions $(k_1,k_2)\neq (0,0)$ and
$(k_2,k_3) \neq (0,0)$ exclude the possibility of the elimination of one input, which would lead to a degenerate production function.
\end{remark}
\begin{remark}
We notice that for $k_2= 0$ we recover a CES-type production function (setting also $\beta_1+\beta_2<1$); for $k_3 = 0$ we obtain the Lu-Fletcher-type production function; for $k_1 = 0$, $k_3 = 0$, and $\delta = 1$ we obtain a Cobb-Douglas-type production function\footnote{We are referring here to the classical Cobb-Douglas function with constant return to scale ($\delta=1$):
\[
Q(u,v)=Au^{1-\alpha}v^{\alpha}, \qquad \alpha=\frac{\beta_1}{\beta_1+\beta_2}.
\]
To obtain increasing/decreasing returns to scale the reader could keep $\delta$ as free parameter.}; finally, for $\beta_1=\frac{1}{\rho \mu}-1, \beta_2 = 1, k_3 = 0$ we get a VES-type production function back.
\end{remark}
By using the same notation as Sect. \ref{Intro}, we introduce the following \linebreak \emph{Kadiyala surface} parametrized by
\begin{equation}
\label{Param2}
    G(u,v)=(u,v,P(u,v)).
\end{equation}

Analogously to the previous section, we can now state the main result. 

\begin{theorem}
Let us consider the Kadiyala surface with the parametrization given by Eq. \eqref{Param2}, with $P(u,v)$ satisfying conditions $(\star\star)$. Then the Kadiyala surface is developable if and only if one of the following conditions holds:
\begin{itemize}
    \item $\delta=1$ (i.e. the Kadiyala production function has constant returns to scale).
    \item $k_2=0$ and $\beta_1+\beta_2=1$
    \item $\beta_1=\beta_2=1$ and $k_2^2-k_1k_3=0$.
\end{itemize}
In particular the last two cases implies that the Kadiyala production function is a \emph{perfect substitutes production function}.
\end{theorem}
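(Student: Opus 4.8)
The plan is to reduce developability to the vanishing of the Hessian determinant of $P$, and then to exploit the homogeneity of the inner expression to collapse that determinant into a single transparent factor. Writing $s:=\beta_1+\beta_2$ and $W(u,v):=k_1u^{s}+2k_2u^{\beta_1}v^{\beta_2}+k_3v^{s}$, so that $P=W^{\delta/s}$, the first observation is that for the graph $(u,v,P)$ one has $\det[I]=1+P_u^2+P_v^2>0$ and $K=\det[II]/\det[I]$ with $\det[II]\det[I]=P_{uu}P_{vv}-P_{uv}^2$; hence the surface is developable if and only if the Hessian determinant $P_{uu}P_{vv}-P_{uv}^2$ vanishes identically on $u,v>0$. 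I would then substitute $P=W^{m}$ with $m=\delta/s$, expand by the chain rule, and factor out the (nonzero) powers of $W$, reducing the problem to the vanishing of
\[
\mathcal N:=(m-1)\,Q_1+W\,Q_2,
\]
where $Q_2=W_{uu}W_{vv}-W_{uv}^2$ is the Hessian determinant of $W$ and $Q_1=W_u^2W_{vv}+W_{uu}W_v^2-2W_uW_vW_{uv}$ is the associated bordered quantity.

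The key step, and the one that makes the statement tractable, is to use that $W$ is homogeneous of degree $s$. Euler's relation $uW_u+vW_v=sW$, together with its derivatives $uW_{uu}+vW_{uv}=(s-1)W_u$ and $uW_{uv}+vW_{vv}=(s-1)W_v$, lets me eliminate $W_{uu}$ and $W_{vv}$ and show that $Q_1$ and $Q_2$ share the common factor
\[
R:=\frac{(s-1)W_uW_v-sW\,W_{uv}}{uv},
\]
namely $Q_1=sW\,R$ and $Q_2=(s-1)R$. Substituting back, the bracket telescopes to $(m-1)s+(s-1)=ms-1=\delta-1$, so that $\mathcal N=W\,R\,(\delta-1)$. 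Since $W>0$ under $(\star\star)$, I conclude that the surface is developable if and only if $\delta=1$ or $R\equiv0$. This yields the first case at once and reduces everything else to analyzing the single expression $R$.

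The remaining work is to determine when $R\equiv0$. Expanding $uvR=(s-1)W_uW_v-sWW_{uv}$ produces a combination of four power-monomials in $u,v$ with coefficients $2s\beta_2(\beta_2-1)k_1k_2$, $s^2(s-1)k_1k_3$, $-4\beta_1\beta_2k_2^2$, and $2s\beta_1(\beta_1-1)k_2k_3$. Generically these four monomials are linearly independent, so each coefficient must vanish; since $s,\beta_1,\beta_2\neq0$, the third forces $k_2=0$, and then $s^2(s-1)k_1k_3=0$ together with the non-degeneracy conditions $(k_1,k_2)\neq(0,0)$ and $(k_2,k_3)\neq(0,0)$ leaves only $s=1$, the second case. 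The one exceptional configuration is $\beta_1=\beta_2$, where the monomials $u^{s-1}v^{s-1}$ and $u^{2\beta_1-1}v^{2\beta_2-1}$ coincide and their coefficients merge into $s^2[(s-1)k_1k_3-k_2^2]$; solving the reduced system there yields, apart from configurations that again collapse into the second case, precisely $\beta_1=\beta_2=1$ with $k_2^2-k_1k_3=0$, the third case. Finally, in the second and third cases $W$ becomes a power of a linear form ($k_1u+k_3v$, respectively $(\sqrt{k_1}u+\sqrt{k_3}v)^2$ using $k_2=\sqrt{k_1k_3}$), so $P$ depends only on a linear combination of the inputs, which is the perfect-substitutes statement; conversely each of the three cases makes $\delta-1$ or $R$ vanish, closing the equivalence.

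I expect the main obstacle to be the bookkeeping that produces the factorization $\mathcal N=W\,R\,(\delta-1)$: without the homogeneity reduction the determinant is an unwieldy sum, and the temptation is to attack it by brute force (or a CAS) without structure. The decisive idea is to recognize the common Euler factor $R$, after which the only genuine subtlety is the degenerate merging of two monomials when $\beta_1=\beta_2$, which must be separated out carefully so as not to miss the third case.
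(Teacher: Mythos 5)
Your proposal is correct, but it follows a genuinely different route from the paper. The paper's proof is a direct, evidently computer-assisted computation: it writes the full Gaussian curvature as $K=T_1T_2/(\Den_G(u,v))^2$, checks that the denominator is positive, observes $T_1\equiv 0\iff\delta=1$, and then runs a case analysis on the $k_i$ ($k_1=0$ or $k_3=0$; $k_2=0$; all $k_i\neq 0$, split into $\beta_1\neq\beta_2$ and $\beta_1=\beta_2$) to decide when $T_2\equiv 0$. You instead never compute $K$ in full: you reduce developability to the vanishing of the Hessian determinant of $P$ (legitimate, since $\det[I]>0$), and use the homogeneity of $W$ through Euler's relations $uW_u+vW_v=sW$, $uW_{uu}+vW_{uv}=(s-1)W_u$, $uW_{uv}+vW_{vv}=(s-1)W_v$ to obtain $Q_1=sWR$, $Q_2=(s-1)R$ and hence the clean factorization $P_{uu}P_{vv}-P_{uv}^2=m^2W^{2m-2}(\delta-1)R$. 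This makes conceptually visible why the factor $\delta-1=ms-1$ splits off, which the paper's brute-force formula leaves opaque inside $T_1$. From that point on the two arguments coincide: one checks $T_2=u^{2-\beta_1}v^{2-\beta_2}R$, your four coefficients $2s\beta_2(\beta_2-1)k_1k_2$, $s^2(s-1)k_1k_3$, $-4\beta_1\beta_2k_2^2$, $2s\beta_1(\beta_1-1)k_2k_3$ are exactly the factored forms of the paper's, and your ``generic independence versus merged monomials at $\beta_1=\beta_2$'' analysis is equivalent to the paper's case split on the $k_i$ (your generic case absorbs the paper's $k_1=0$, $k_3=0$ and $k_2=0$ cases at once, since the coefficient $-4\beta_1\beta_2k_2^2$ forces $k_2=0$ and nondegeneracy then forces $k_1,k_3>0$). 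What your approach buys: a shorter, structure-revealing derivation that avoids the enormous $\Den_G$ expression and generalizes immediately to any production function of the form $W^{\delta/s}$ with $W$ homogeneous of degree $s$ (in particular it would also reprove Theorem 3.1 for the VES surface). What the paper's approach buys: the explicit closed form of $K$ itself, which is what one needs if one wants sign information about the curvature away from the developable locus (as in the VES theorem and in the outlook of Section 5), rather than only its vanishing.
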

\begin{proof}
Firstly, we explicitly calculate the Gaussian curvature of the Kadiyala surface and we obtain:
\begin{equation*}
K=\frac{T_1(u,v)\cdot T_2(u,v)}{(\Den_G(u,v))^2},
\end{equation*}
where
\begin{align*}
    T_1(u,v)=&\left(\beta _1+\beta _2\right){}^2 (\delta -1) \delta ^2 u^{\beta _1+2} v^{\beta _2+2}\cdot\\
    &\cdot\left(k_1 u^{\beta _1+\beta _2}+v^{\beta _2} \left(2 k_2 u^{\beta _1}+k_3 v^{\beta _1}\right)\right){}^{\frac{2 \delta }{\beta _1+\beta _2}+2},\\
    T_2(u,v)=& \left(\beta _1+\beta _2\right) k_1 u^{\beta _2} \Big ( 2 \left(\beta _2-1\right) \beta _2 k_2 u^{\beta _1}+( \beta _1^2+  \\
  &+\left(2 \beta _2-1\right) \beta _1+\left(\beta _2-1\right) \beta _2) k_3 v^{\beta _1}\Big )-2 \beta _1 k_2 v^{\beta _2}\cdot \\
    &\cdot\Big ( 2 \beta _2 k_2 u^{\beta _1}-\left(\beta _1-1\right) \left(\beta _1+\beta _2\right) k_3 v^{\beta _1}\Big ),
\end{align*}
and
\begin{equation*}
\Den_G(u,v)= A_1+A_2+A_3+A_4+A_5,
\end{equation*}
with
\begin{align*}
    A_1=& \left(\beta _1+\beta _2\right){}^2 k_1^2 v^2 u^{2 \left(\beta _1+\beta _2\right)} \cdot \\
    & \cdot \left(\delta ^2 \left(k_1 u^{\beta _1+\beta _2}+v^{\beta _2} \left(2 k_2 u^{\beta _1}+k_3 v^{\beta _1}\right)\right){}^{\frac{2 \delta }{\beta _1+\beta _2}}+u^2\right)\\
    A_2=& \left(\beta _1+\beta _2\right){}^2 k_3^2 u^2 v^{2 (\beta _1+ \beta _2)} \cdot \\
    & \cdot \left(\delta ^2 \left(k_1 u^{\beta _1+\beta _2}+v^{\beta _2} \left(2 k_2 u^{\beta _1}+k_3 v^{\beta _1}\right)\right){}^{\frac{2 \delta }{\beta _1+\beta _2}}+v^2\right)\\
    A_3=& 4 \left(\beta _1+\beta _2\right) k_2 k_3 u^{\beta _1+2} v^{\beta _1+2 \beta _2} \cdot \\
    &\cdot \left(\beta _2 \left(\delta ^2 \left(k_1 u^{\beta _1+\beta _2}+v^{\beta _2} \left(2 k_2 u^{\beta _1}+k_3 v^{\beta _1}\right)\right){}^{\frac{2 \delta }{\beta _1+\beta _2}}+v^2\right)+\beta _1 v^2\right)\\
    A_4=& 4 k_2^2 u^{2 \beta _1} v^{2 \beta _2}\cdot \\ &\cdot \bigg ( \beta _1^2 v^2 \left(\delta ^2 \left(k_1 u^{\beta _1+\beta _2}+v^{\beta _2} \left(2 k_2 u^{\beta _1}+k_3 v^{\beta _1}\right)\right){}^{\frac{2 \delta }{\beta _1+\beta _2}}+u^2\right)+\beta _2^2 u^2 \cdot \\
    &\cdot \left(\delta ^2 \left(k_1 u^{\beta _1+\beta _2}+v^{\beta _2} \left(2 k_2 u^{\beta _1}+k_3 v^{\beta _1}\right)\right){}^{\frac{2 \delta }{\beta _1+\beta _2}}+v^2\right)+2 \beta _1 \beta _2 u^2 v^2 \bigg )\\
    A_5=&2 \left(\beta _1+\beta _2\right) k_1 u^{\beta _1+\beta _2} v^{\beta _2+2} \cdot \\
   &  \cdot  \bigg ( \left(\beta _1+\beta _2\right) k_3 u^2 v^{\beta _1}+2 k_2 u^{\beta _1} \cdot \\
   &\cdot \left(\beta _1 \left(\delta ^2 \left(k_1 u^{\beta _1+\beta _2}+v^{\beta _2} \left(2 k_2 u^{\beta _1}+k_3 v^{\beta _1}\right)\right){}^{\frac{2 \delta }{\beta _1+\beta _2}}+u^2\right)+\beta _2 u^2\right) \bigg )
\end{align*}

$\Den_G(u,v)$ is clearly positive for $(u,v)\neq (0,0)$, since it consists of sums and products of positive terms\footnote{We recall that $(\star\star)$ implies that $\beta_1$, $\beta_1$ and $\beta_1+\beta_2$ have the same sign.}.

To prove the thesis we need to describe the parameter set in which $K\equiv 0$. We remark that
\[ 
K\equiv 0 \iff T_1(u,v)\equiv 0 \vee T_2(u,v)\equiv 0.
\]
If $\delta=1$, we immediately get $T_1(u,v)\equiv 0$, and hence $K\equiv 0$. Let us assume $\delta \neq 1$. In this case $T_1(u,v)\neq 0$, so we can conclude that 
\[
T_1(u,v)\equiv 0\iff \delta=1. 
\]
We shall now study $T_2(u,v)$. Firstly,
we rewrite $T_2(u,v)$, collecting powers of $u$ and $v$ as follows:
\begin{align*}
    T_2(u,v)=&\left(2 \beta _2^3+2 \beta _1 \beta _2^2-2 \beta _2^2-2 \beta _1 \beta _2\right) k_1 k_2 u^{\beta _1+\beta _2}\\
    &-4 \beta _1 \beta _2 k_2^2 u^{\beta _1} v^{\beta _2}\\
    &+\left(\beta _1^3+3 \beta _2 \beta _1^2-\beta _1^2+3 \beta _2^2 \beta _1-2 \beta _2 \beta _1+\beta _2^3-\beta _2^2\right) k_1 k_3 u^{\beta _2} v^{\beta _1}\\
    &+\left(2 \beta _1^3+2 \beta _2 \beta _1^2-2 \beta _1^2-2 \beta _2 \beta _1\right) k_2 k_3 v^{\beta _1+\beta _2}.
\end{align*}
If $k_1= 0$ (or, by symmetry, if $k_3=0$), we obtain $T(u,v)\neq 0$ (in the first quadrant). In the case $k_2=0$, we have
\begin{align*}
    T_2(u,v)=&\left(\beta _1^3+3 \beta _2 \beta _1^2-\beta _1^2+3 \beta _2^2 \beta _1-2 \beta _2 \beta _1+\beta _2^3-\beta _2^2\right) k_1 k_3 u^{\beta _2} v^{\beta _1}\\
    =&\left(\beta _1+\beta _2-1\right) \left(\beta _1+\beta _2\right){}^2k_1 k_3 u^{\beta _2} v^{\beta _1}
\end{align*}
We get $T_2(u,v)\equiv 0$ if and only if\footnote{The solution $\beta_1=-\beta_2$ is forbidden by $(\star\star)$.} $\beta_1+\beta_2=1$. 

Finally, let we assume $k_i\neq 0$ for $i=1,2,3$. If $\beta_1\neq \beta_2$ it is impossible\footnote{Because of, e.g., the term $-4 \beta _1 \beta _2 k_2^2 u^{\beta _1} v^{\beta _2}$.} to obtain $T_2(u,v)=0$. Thus, let us fix $\beta_1=\beta_2=\beta$. So $T_2(u,v)$ becomes:
\begin{align*}
    T_2(u,v) =& 4 \beta^2 \left( \beta- 1 \right) k_1 k_2 u^{2 \beta }+\\
    &4\beta^2 \left(\left(2 \beta - 1 \right) k_1 k_3- k_2^2\right) u^{\beta } v^{\beta }+\\
    &4\beta^2 \left( \beta -1 \right) k_2 k_3 v^{2 \beta },
\end{align*}
which is equal to $0$ if and only if\footnote{The solution $\beta=0$ is forbidden by $(\star\star)$.} $\beta=1$ and $k_1k_3=k_2^2$. The proof is completed by noting that
\[
T_2(u,v)\equiv 0 \iff \left ( k_2=0 \wedge \beta_1+\beta_2=1 \right ) \vee \left ( \beta_1=\beta_2=1 \wedge k_1k_3=k_2^2 \right ).
\]
In conclusion, we notice that for $k_2=0$ and $\beta_1+\beta_2=1$ we have the following function:
\[
P_1(u,v)=(k_1u+k_3v)^{\delta}.
\]
Moreover, for $\beta_1=\beta_2=1$ and $k_2^2-k_1k_3=0$ we obtain
\[
P_2(u,v)=\left( \sqrt{k_1}u+\sqrt{k_3}v\right)^{\delta}.
\]
Thus, both cases lead to a perfect substitutes production function.
\end{proof}
\section{Summary and conclusions}
In this paper we analyze two production functions from the point of view of differential geometry. 

In particular, in accordance with the approach of V\^ilcu, we give a characterization of the (2-input) VES function in terms of curvature of the related surface. This result is analogous (as we expected) to the results obtained by V\^ilcu for the Cobb-Douglas and the CES production function. 

The second part of the paper is devoted to another kind of production function, which could be seen as a combination of the most famous (2-inputs) production functions. We call it Kadyiala production function. For the latter, computations become more cumbersome, but it is still possible to give a characterization connected with the Gaussian curvature of the corresponding surface, at least in the case of developable surfaces. The constant returns to scale is a necessary condition if we suppose that the Kadyiala production function is not a perfect substitutes production function; this result is consistent with the previous works of V\^ilcu, as well.

We conclude with a short outlook on possible research perspectives: a natural successive step in our analysis would be to study in detail the sign of the curvature of the Kadyiala production function, its dependence on specific choices of the parameters and the interpretation of such picks. Another logical path to follow would be to generalize the results presented in this paper for functions of a generic number of inputs $n$; however, one would need to propose a clever way of analyzing such a function, since computations proved to be cumbersome even for the 2-dimensional case; in this regard, it would be particularly interesting to study the connections between our work and recent papers by Ioan \cite{Ioan2011} and V\^ilcu \cite{Vilcu2018}. 
\par\bigskip\noindent
{\bf Declarations of interest:} none.
\par\bigskip\noindent
{\bf Acknowledgments:} NC and MS would like to thank the University of Pavia and the University of Trento, respectively, for supporting their research.

\bibliographystyle{acm} \small
\bibliography{biblio}

\end{document}